\documentclass[12pt,a4paper]{article}
\usepackage{verbatim}
\usepackage[utf8]{inputenc} 
\usepackage[english]{babel} 
\usepackage{anysize} 
\usepackage{indentfirst}
\usepackage{amsmath}
\usepackage{amsthm} 
\usepackage{amsfonts}
\usepackage{amssymb}
\usepackage{graphicx}
\usepackage{enumitem}

\usepackage{subfig} 
\usepackage{xcolor}

\usepackage{fancyhdr}
\usepackage{calligra}

\newtheorem{theorem}{Theorem}[section]

\newtheorem{definition}[theorem]{Definition}


\title{On Extensions of Quasi-Overlap and Quasi-Grouping Functions Defined on Bounded Lattices}
\author{Ana Shirley Monteiro \and Regivan Santiago \and Benjamín Bedregal \and Juscelino Araújo \and Eduardo Palmeira}

\date{}

\begin{document}
\maketitle

\section{Introduction}

In this paper, we propose a method of extending quasi-overlap and grouping functions defined on a sublattice $ M $ of a bounded lattice $ L $ to this lattice considering a more general version of sublattice definition, introduced by Palmeira and Bedregral in \cite{palmeira2012extension}.

\section{Extensions of Quasi-Overlap Functions}

In what follows we recall the notion of \emph{quasi-overlap function}.

\begin{definition}[Quasi-overlap function,  \cite{Ovlp&QOvlp_Paiva2021}] 	
	\label{def:quasioverlap}
	A binary operator $ O: L^2 \to L $ is said to be a \emph{quasi-overlap function} on $ L $ if, for any $ x, y, z \in L $, the following conditions hold:
	\begin{enumerate}[leftmargin=*,label={\rm($\mathbf{QO\arabic*}$)}]
		\item \label{item_QO1} commutativity: $ O(x,y) = O(y,x) $;
		\item \label{item_QO2} $ O(x,y) = 0_L $ iff $ x = 0_L $ or $ y=0_L $;
		\item \label{item_QO3} $ O(x,y)=1_L $ iff $ x = 1_L $ and $ y = 1_L $;
		\item \label{item_QO4} increasingness: $ O(x,y) \leq_L O(x,z) $ whenever $ y \leq_L z $.
	\end{enumerate}
\end{definition}

Suppose there exists a retraction $ r : L \to M $ with pseudo-inverse $ s : M \to L $. If $ O $ is a quasi-overlap function on $ M $, we want to provide a quasi-overlap function $ O^E $ on $ L $ such that 
\begin{center}
	$O^{E} \circ (s \times s) = s \circ O$
\end{center} 
In this context we will say that \emph{$O^{E}$ extends $ O $ from $ M $ to $ L $}.

\begin{theorem}
	\label{theo:extensofquasiover}
	If $O$ is a quasi-overlap on $M$ and $r: L \to M$ is a retraction with pseudo-inverse $s: M \to L$ such that
	\begin{align}
		\label{eq:extensofquas1}
		r(x) = 0_M  &\Leftrightarrow  x = 0_L;\\
		\label{eq:extensofquas2}
		r(x) = 1_M  &\Leftrightarrow  x = 1_L;
	\end{align}
	then $O^{E}: L^2 \to L$, defined by 
	\begin{equation}
		\label{eq:extensofquas3}
		O^{E}(x, y) = s(O(r(x), r(y))),
	\end{equation}
	is a quasi-overlap function that extends $ O $ from $ M $ to $ L $.
\end{theorem}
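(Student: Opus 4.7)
The goal is to verify the four axioms \ref{item_QO1}--\ref{item_QO4} for $O^{E}$ and, separately, the extension identity $O^{E}\circ(s\times s) = s\circ O$. The strategy is to move every computation back to $M$ via $r$, apply the corresponding property of $O$ there, and then lift through $s$; the two biconditionals \eqref{eq:extensofquas1}--\eqref{eq:extensofquas2} together with $r\circ s = \mathrm{id}_{M}$ (which is the defining property of a retraction with pseudo-inverse in the sense of Palmeira--Bedregal) will be the glue that keeps $0_{L}, 1_{L}$ correctly handled.

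The first step I would take is to extract the boundary behaviour of $r$ and $s$ that is implicitly contained in \eqref{eq:extensofquas1}--\eqref{eq:extensofquas2}. Evaluating the ``$\Leftarrow$'' direction gives $r(0_{L}) = 0_{M}$ and $r(1_{L}) = 1_{M}$; then applying $r$ to $s(0_{M})$ and using $r\circ s = \mathrm{id}_{M}$ followed by the ``$\Rightarrow$'' direction of \eqref{eq:extensofquas1} yields $s(0_{M}) = 0_{L}$, and analogously $s(1_{M}) = 1_{L}$. With these four equalities in hand, \ref{item_QO2} and \ref{item_QO3} reduce to a straightforward chase: $O^{E}(x,y) = 0_{L}$ iff $r(s(O(r(x),r(y)))) = 0_{M}$ iff $O(r(x),r(y)) = 0_{M}$ iff (by \ref{item_QO2} for $O$) $r(x) = 0_{M}$ or $r(y) = 0_{M}$ iff $x = 0_{L}$ or $y = 0_{L}$; the case for $1_{L}$ is symmetric.

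Axiom \ref{item_QO1} is immediate from commutativity of $O$, and \ref{item_QO4} follows by composing three monotone maps $r$, $O(r(x),\cdot)$, and $s$, where monotonicity of $r$ and $s$ comes from their status as lattice morphisms in the Palmeira--Bedregal framework. Finally, the extension identity collapses to a one-line computation: $O^{E}(s(a),s(b)) = s(O(r(s(a)),r(s(b)))) = s(O(a,b))$ for all $a,b\in M$, again using $r\circ s = \mathrm{id}_{M}$. The only step that requires genuine care, rather than routine unfolding, is deducing the four boundary equalities in the second paragraph; everything else is a mechanical transport along $r$ and $s$.
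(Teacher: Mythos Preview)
Your proposal is correct and follows essentially the same route as the paper: the extension identity via $r\circ s=\mathrm{id}_M$, commutativity and monotonicity by transport along $r$ and $s$, and the boundary axioms \ref{item_QO2}--\ref{item_QO3} via the same biconditional chain using \eqref{eq:extensofquas1}--\eqref{eq:extensofquas2}. The only cosmetic difference is that you isolate the four boundary equalities $r(0_L)=0_M$, $r(1_L)=1_M$, $s(0_M)=0_L$, $s(1_M)=1_L$ up front, whereas the paper invokes them inline under the label ``Eq.~\eqref{eq:extensofquas1} and $r\circ s=\mathrm{Id}_M$''.
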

\begin{proof} For all $(x,y) \in M \times M$, we have:
\begin{align*}
\left[ O^{E} \circ (s \times s)\right] (x,y) &= O^{E}(s(x),s(y)) = s\left(O\left(r(s(x)),r(s(y))\right)\right) = s\left(O\left(x,y\right)\right) = \left( s \circ O\right) (x,y). & 
	\end{align*}
	Thus, $O^{E}$ extends $ O $ from $ M $ to $ L $. Furthermore, the following properties hold:
	
	\begin{enumerate}[leftmargin=*,label={\rm(QO\arabic*)}]	
	\item $ O^E $ is commutative once $ O $ is commutative;
		
	\item For all $(x,y) \in L^2$, we have:
	\begin{align*}
	O^E(x,y) = 0_L & \Leftrightarrow s\left(O\left(r(x),r(y)\right)\right) = 0_L & [\text{By Eq.} \eqref{eq:extensofquas3}.]\\
			& \Leftrightarrow O\left(r(x),r(y)\right) = 0_M & [\text{By Eq.\eqref{eq:extensofquas1} and $ r \circ s = Id_M $.}]\\
			& \Leftrightarrow r(x) = 0_M \   \text{or} \ r(y) = 0_M & [\text{By \ref{item_QO2}}.]\\
			& \Leftrightarrow x = 0_L \   \text{or} \ y = 0_L. & [\text{By Eq.}  \eqref{eq:extensofquas1}.]
		\end{align*}
		
		\item Analogous to (QO2).
				
		\item $ O^E $ is an increasing function since $ r $, $ s $ and $ O $ are increasing.
		
		%
	\end{enumerate}
\end{proof}

\section{Extensions of Quasi-Grouping Functions}

In what follows we remind the notion of \emph{quasi-grouping function}. 

\begin{definition}[Quasi-grouping function,\cite{QiaoQuasiGroup2022}] 
	\label{def:quasigroupping}
	A binary operator $ G: L^2 \to L $ is said to be a \emph{quasi-grouping function} on $ L $ if, for any $ x, y, z \in L $, the following conditions hold:
	\begin{enumerate}[leftmargin=*,label={\rm($\mathbf{QG\arabic*}$)}]
		\item \label{item:QG1} commutativity: $ G(x,y) = G(y,x) $;
		\item \label{item:QG2} $ G(x,y) = 0_L $ iff $ x = 0_L $ and $ y=0_L $;
		\item \label{item:QG3} $ G(x,y)=1_L $ iff $ x = 1_L $ or $ y = 1_L $;
		\item \label{item:QG4} increasingness: $ G(x,y) \leq_L G(x,z) $ whenever $ y \leq_L z $.
	\end{enumerate}
\end{definition}

\medskip 

\begin{theorem}
\label{theo:extensofqgroup}
If $G$ is a quasi-grouping on $M$ and $r: L \to M$ is a retraction with pseudo-inverse $s: M \to L$ satisfying Eq. \eqref{eq:extensofquas1} and Eq. \eqref{eq:extensofquas2}, then $G^{E}: L^2 \to L$, defined by 
\begin{equation}
\label{eq:extensofqgroup}
		G^{E}(x, y) = s(G(r(x), r(y))),
	\end{equation}
	is a quasi-grouping function that extends $ G $ from $ M $ to $ L $.
\end{theorem}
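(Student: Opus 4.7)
The plan is to mirror the proof of Theorem~\ref{theo:extensofquasiover}, exploiting the fact that quasi-grouping functions are, axiomatically, the ``dual'' of quasi-overlap functions (with the roles of the boundary conditions \ref{item:QG2}/\ref{item:QG3} swapped relative to \ref{item_QO2}/\ref{item_QO3}). The construction of $G^E$ has exactly the same shape as $O^E$, so the extension property and the structural axioms should propagate through the same algebraic manipulations, just using the appropriate logical connective (``and''/``or'') at the boundary steps.

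First, I would verify that $G^E$ extends $G$ from $M$ to $L$. For any $(x,y) \in M\times M$, I compute
\[
G^{E}(s(x),s(y)) = s\bigl(G(r(s(x)),r(s(y)))\bigr) = s\bigl(G(x,y)\bigr),
\]
using $r\circ s = \mathrm{Id}_M$. This gives $G^{E}\circ (s\times s) = s\circ G$ exactly as before.

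Next, I would check \ref{item:QG1}--\ref{item:QG4} in order. Commutativity \ref{item:QG1} is immediate from the commutativity of $G$. For \ref{item:QG2}, I would chain equivalences: $G^E(x,y) = 0_L$ iff $s(G(r(x),r(y))) = 0_L$ iff (by Eq.~\eqref{eq:extensofquas1} applied with $r\circ s = \mathrm{Id}_M$, noting that $s$ preserves $0$ by the forward direction of the equivalence composed with $r$) $G(r(x),r(y)) = 0_M$, iff $r(x) = 0_M$ \emph{and} $r(y) = 0_M$ by \ref{item:QG2} for $G$, iff $x = 0_L$ and $y = 0_L$ by Eq.~\eqref{eq:extensofquas1}. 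Property \ref{item:QG3} is handled identically but using Eq.~\eqref{eq:extensofquas2} and the ``or'' disjunction from \ref{item:QG3} for $G$. Finally, \ref{item:QG4} follows from the fact that $G^E$ is a composition of the increasing maps $r$, $s$ and $G$.

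The only subtle step is the one that forces Eq.~\eqref{eq:extensofquas1} and Eq.~\eqref{eq:extensofquas2} into the hypotheses: one needs to know that $s(a) = 0_L$ implies $a = 0_M$ (and likewise for $1$), which is not automatic from $s$ being a pseudo-inverse. I would derive this by applying $r$ to both sides of $s(a)=0_L$, obtaining $a = r(s(a)) = r(0_L) = 0_M$ via the right-to-left direction of Eq.~\eqref{eq:extensofquas1}. Apart from bookkeeping the ``and''/``or'' difference against the quasi-overlap case, no genuinely new difficulty appears, so I expect the proof to be essentially routine once the dualization of \ref{item_QO2}/\ref{item_QO3} to \ref{item:QG2}/\ref{item:QG3} is made explicit.
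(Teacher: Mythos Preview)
Your proposal is correct and follows essentially the same approach as the paper: the same extension computation using $r\circ s=\mathrm{Id}_M$, the same chain of equivalences for \ref{item:QG2} (with the paper likewise appealing to Eq.~\eqref{eq:extensofquas1} together with $r\circ s=\mathrm{Id}_M$), the same ``analogous'' dismissal of \ref{item:QG3}, and the same increasingness-by-composition for \ref{item:QG4}. Your added remark unpacking why $s(a)=0_L$ forces $a=0_M$ is exactly the content the paper compresses into its bracketed justification, so there is no substantive difference.
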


\begin{proof} For all $(x,y) \in M \times M$, we have:
\begin{align*}
\left[ G^{E} \circ (s \times s)\right] (x,y) & = G^{E}(s(x),s(y)) = s\left(G\left(r(s(x)),r(s(y))\right)\right) = s\left(G\left(x,y\right)\right) = \left( s \circ G\right) (x,y). & 
\end{align*}
Thus, $G^{E}$ extends $ G $ from $ M $ to $ L $. In addition, the following properties hold:
	\begin{enumerate}[leftmargin=*,label={\rm(QG\arabic*)}]
		\item $ G^E $ is commutative, since $ G $ is commutative;
		
		\item For all $(x,y) \in L^2$, we have:
		\begin{align*}
			G^E(x,y) = 0_L & \Leftrightarrow s\left(G\left(r(x),r(y)\right)\right) = 0_L & [\text{By Eq.} \eqref{eq:extensofqgroup}.]\\
			& \Leftrightarrow G\left(r(x),r(y)\right) = 0_M & [\text{By Eq.\eqref{eq:extensofquas1} and $ r \circ s = Id_M $.}]\\
			& \Leftrightarrow r(x) = 0_M \   \text{and} \ r(y) = 0_M & [\text{By \ref{item:QG2}.}]\\
			& \Leftrightarrow x = 0_L \  \text{and} \ y = 0_L & [\text{By Eq.}  \eqref{eq:extensofquas1}.]
		\end{align*}
		
		\item Analogous to (QG2).
				
		\item $ G^E $ is an increasing function since $ r $, $ s $ and $ G $ are increasing.
		
	\end{enumerate}
\end{proof}



\end{document}